\providecommand{\U}[1]{\protect \rule{.1in}{.1in}}
\newtheorem{theorem}{Theorem}
\newtheorem{corollary}[theorem]{Corollary}
\newtheorem{lemma}[theorem]{Lemma}
\newtheorem{proposition}[theorem]{Proposition}
\newtheorem{remark}[theorem]{Remark}
\newenvironment{proof}[1][Proof]{\noindent \textbf{#1.} }{\  \rule{0.5em}{0.5em}}
\begin{document}

\title{On the distribution of linear combinations of eigenvalues of the
Anderson model}

\author{Shmuel Fishman}

\email{fishman@physics.technion.ac.il}

\author{Yevgeny Krivolapov}

\email{evgkr@tx.technion.ac.il}

\affiliation{Physics Department, Technion - Israel Institute of Technology, Haifa
32000, Israel.}

\author{Avy Soffer}

\email{soffer@math.rutgers.edu}

\affiliation{Mathematics Department, Rutgers University, New-Brunswick, NJ 08903,
USA.}

\keywords{Anderson localization, random potential}

\begin{abstract}
Probabilistic estimates on linear combinations of eigenvalues of the
one dimensional Anderson model are derived. So far only estimates
on the density of eigenvalues and of pairs were found by Wegner and
by Minami. Our work was motivated by perturbative explorations of
the Nonlinear Schrödinger Equation, where linear combinations of eigenvalues
are the denominators and evaluation of their smallness is crucial.
\end{abstract}

\date{August 29, 2008}

\maketitle

\section{Introduction}

The study of the statistical properties of eigenvalues and their distribution
for the Anderson model has attracted a lot of attention in the Mathematical
Physics and Physics communities. The one dimensional Anderson model
on a lattice can be defined as an eigenvalue problem\begin{equation}
E\psi_{E}\left(x\right)=\psi_{E}\left(x-1\right)+\psi_{E}\left(x+1\right)+\varepsilon_{x}\psi_{E}\left(x\right)\qquad x\in\mathbb{Z}\label{eq:Anderson_model}\end{equation}
where $\psi_{E}\left(x\right)$ is the eigenfunction corresponding
to the eigenvalue $E$. The sites of the lattice are denoted by $x$
while $\varepsilon_{x}$ are independed random variables which are
unifomly distributed in the interval $\left[-\Delta,\Delta\right]$.
It exhibits compelete localization and does not have any exceptional
states. Important progress in the understanding of the distribution
of eigenvalues has been made at the rigorous level with the fundamental
works of Wegner \cite{Wegner} (also see e.g. Aizenman-Molchanov
(AM) \cite{AM}) and Minami \cite{Minami,GV}, which describes the
main aspects of the distribution of the distances between the eigenvalues.
Wegner's estimate shows that the typical distance is of order $1/V$,
where V is the volume of the cutoff Hamiltonian. A lot of numerical
and other arguments, led to the conjecture that the distribution of
the distances between eigenvalues (EV) comes as from a Poissonian
distribution, and the probability to have two EVs in an interval of
size $I$ is proportional to $I^{2}$. This last statement was finally
proved by Minami \cite{Minami}. Very recently, we showed that the
distribution of the energy level distances, and higher order combinations
are crucial to understanding of the large time behavior of the Nonlinear
Schrödinger Equation (NLSE) with the Anderson potential in its linear
part \cite{FKS1}. This can be understood when one studies the appropriate
time dependent perturbative construction of the solution, around the
linear part. Then, the leading potentially large terms consist of
denominators which are linear combinations (with integer coefficients)
of the various EVs of the linear or renormalized linear problem. Thus
we are naturally led to consider the probabilistic estimates on the
smallness of quantities like \begin{equation}
f=\sum_{k=1}^{R}c_{k}E_{i_{k}}\label{eq:en_sum}\end{equation}
 where the sum over $k$ is finite. Here $c_{k}$ are integers and
$E_{i}$ are the eigenvalues of the Anderson model (\ref{eq:Anderson_model}).
The indexing of the EV we use is such that to $E_{i}$ corresponds
the eigenfunction which is localized around the lattice site $i$
\cite{CGK2,GK}.

Very recently progress on this problem has been achieved in the works
\cite{AW} and \cite{CGK}. Our aim in this work is to obtain the
needed estimates for the perturbative construction of large time solutions
for the NLSE with Anderson potentials \cite{WZ} , as developed in
\cite{FKS1,FKS2}. Our strategy is to begin with the AM approach
\cite{AM}, and estimate the expectations of fractional powers of
the inverse of the linear combinations of EVs, denoted by $f$ (Sec.
III). It turns out that the critical step in this problem is to estimate
from above the probability of avoided crossings, which is equivalent
to controlling the probability of states, with two humps that are
far away. Another useful tool is the demonstration of a \textbf{lower}
bound on the eigenfunctions, which is uniform in the potential, with
large probability (Sec. II). A modified version of $f$ is considered
in Sec. IV.

\section{The asymptotic exponential decay of the eigenstates}

For each energy $E$ the inverse localization length, $\gamma\left(E\right),$
is continuous as a result of the Thouless formula \cite{Thouless}
and, $\gamma\left(E\right)>0$ (see more details after (\ref{eq:en_cover})).
Since the spectrum of a Hamiltonian $H$ on a lattice is compact,
it follows that, given $\bar{\eta},$ we can cover it with a finite
collection of open intervals, such that each interval is around $\lambda_{k},\left.k=1,2...K<\infty,\right.$
and for energies in that interval $\left.\left(1-\bar{\eta}\right)\gamma\left(\lambda_{k}\right)\leq\gamma\left(E\right)\leq\left(1+\bar{\eta}\right)\gamma\left(\lambda_{k}\right)\right.,$
where $\bar{\eta}$ is arbitrarily small.

Now we focus on one of such intervals located around the energy $\lambda_{0}\in I_{0},$
and fixed $\left.\varepsilon\ll2\bar{\eta}\gamma\left(\lambda_{0}\right)\right..$
We denote by $s_{n,j}$ the set of all the potentials for which the
energy $E_{j}\in I_{0},$ and such that, \begin{equation}
\left\vert e^{\left(\gamma\left(E_{j}\right)+\varepsilon\right)\left\vert x\right\vert }\psi_{E_{j}}\left(x\right)\right\vert \geq1\qquad j\in\mathbb{Z}\label{eq:cond}\end{equation}
 the inequality holds for all $\left\vert x\right\vert \geq n\left(\bar{\eta},\varepsilon\right),$
and $x$ is the distance from the localization center of the wave
function. It is a consequence of the Furstenberg theorem \cite{Furst}
and of the analysis of \cite{Souillard} that $n$ is finite for
a.e. potential (for an alternative approach see \cite{Klein}).

The measure of the potentials belonging to $s_{n,j}$ is denoted by
$S_{n,j}.$ Clearly, $S_{n,j}\geq0,$ and we only consider the case
where the measure of the set of potentials with $E_{j}$ belonging
to the interval around $\lambda_{0}$ is nonzero. $S_{n,j}$ is a
monotonic increasing sequence of $n$: this follows since by construction,
if $S_{n,j}$ is such that condition (\ref{eq:cond}) is satisfied
for $\left\vert x\right\vert \geq n_{j}$ it is also satisfied for
$\left\vert x\right\vert \geq n_{j}+l,~l>0.$ Furthermore, one can
verify that each $S_{n,j}$ is measurable.

\begin{lemma} The limit $\lim_{n\rightarrow\infty}S_{n,j}=\mu'_{0}>0,$
exists with some positive $\mu'_{0},$ where $\mu'_{0}$ is the measure
of all the potentials, such that $E_{j}$ is in the interval around
$\lambda_{0}.$ \end{lemma}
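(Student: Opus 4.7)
The plan is to apply monotone convergence (continuity of measure from below). By the monotonicity $s_{n,j}\subseteq s_{n+1,j}$ and the measurability of each $s_{n,j}$, both already noted before the lemma, the increasing sequence $S_{n,j}$ converges to $\left|\bigcup_{n=1}^{\infty}s_{n,j}\right|$. Denoting this limit by $\ell_{j}$ and the set of potentials with $E_{j}\in I_{0}$ by $\Omega_{0}$, one has $\bigcup_{n}s_{n,j}\subseteq\Omega_{0}$ and hence $\ell_{j}\leq\mu'_{0}:=|\Omega_{0}|$. The task therefore reduces to proving the matching lower bound, which is equivalent to showing
\[
\left|\Omega_{0}\setminus\bigcup_{n=1}^{\infty}s_{n,j}\right|=0.
\]

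For this I would invoke the Furstenberg/Souillard input already cited in the paragraph preceding the lemma. That result guarantees that for almost every potential $\omega\in\Omega_{0}$ the localized eigenfunction $\psi_{E_{j}(\omega)}$ decays almost surely at the sharp Lyapunov rate $\gamma(E_{j}(\omega))$. Because $\varepsilon>0$, the slack in the exponent absorbs sub-exponential fluctuations, so condition (\ref{eq:cond}) is eventually satisfied: there exists a measurable, almost-surely finite integer $n(\omega)=n(\bar{\eta},\varepsilon,\omega)$ such that (\ref{eq:cond}) holds for every $|x|\geq n(\omega)$. Equivalently, $\omega\in s_{n,j}$ for every $n\geq n(\omega)$, so $\omega\in\bigcup_{n}s_{n,j}$. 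Hence the exceptional set has measure zero and $\ell_{j}=\mu'_{0}$.

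Positivity $\mu'_{0}>0$ is just the standing hypothesis stated right before the lemma, namely that the measure of the set of potentials with $E_{j}\in I_{0}$ is nonzero (equivalently, that the density of states charges the interval around $\lambda_{0}$, which is automatic once $I_{0}$ lies in the spectrum of the Anderson model).

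The main obstacle is the almost-sure finiteness of $n(\omega)$, which is the only non-elementary ingredient. I would not reprove it but rather import it as a black box from \cite{Furst,Souillard} (or through the alternative route of \cite{Klein}); the remainder of the argument is then a direct application of $\sigma$-additivity together with the monotonicity and measurability noted earlier in the section.
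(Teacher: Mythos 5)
Your proposal is correct and follows essentially the same route as the paper: the key input in both is the Furstenberg/Delyon--Levy--Souillard almost-sure exponential decay at the Lyapunov rate, with the $\varepsilon$ slack absorbing the sub-exponential prefactor, so that almost every admissible potential lies in $\bigcup_{n}s_{n,j}$. Your finishing step via continuity of the measure from below on the increasing union is in fact a cleaner way to identify the limit than the countable-subadditivity inequality the paper writes, but it is the same argument in substance.
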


\begin{proof} The lemma follows, since by Furstenberg theorem \cite{Furst}
and Delyon-Souillard-Levy \cite{Souillard} for almost all energies
the eigenfunctions decay exponentially with the rate $\gamma\left(E\right):$\begin{equation}
\lim_{x\rightarrow\infty}\frac{\ln\left[\left\vert \psi_{E_{j}}\left(x\right)\right\vert ^{2}+\left\vert \psi_{E_{j}}\left(x+1\right)\right\vert ^{2}\right]}{2x}=\gamma\left(E_{j}\right).\end{equation}
 Therefore\begin{equation}
\left(\left\vert \psi_{E_{j}}\left(x\right)\right\vert ^{2}+\left\vert \psi_{E_{j}}\left(x+1\right)\right\vert ^{2}\right)^{1/2}\asymp P\left(x\right)e^{-\gamma\left(E_{j}\right)\left\vert x\right\vert }\end{equation}
 with $P\left(x\right)$ sub-exponentially bounded, above and below
(decaying or increasing), and $x$ is the distance from the localization
center \cite{Rio,KS}.

Hence\begin{equation}
\lim_{x\rightarrow\infty}\left\vert e^{\left(\gamma\left(E_{j}\right)+\varepsilon\right)\left\vert x\right\vert }P\left(x\right)e^{-\gamma\left(E_{j}\right)\left\vert x\right\vert }\right\vert =\infty\end{equation}
 for all $\varepsilon>0,$ and therefore for each eigenfunction condition
(\ref{eq:cond}) is satisfied for some \emph{finite} $n_{j}.$ Hence
all realizations of the potential satisfy condition (\ref{eq:cond})
for any $E_{0}$ in the interval around $\lambda_{0}:$\begin{equation}
\mu_{0}=\mu\left(\text{all potentials}\right)=\mu\left({\displaystyle \bigcup_{n,j}}s_{n,j}\right)\leq{\displaystyle \sum\limits _{n,j}}\mu\left(s_{n,j}\right),\end{equation}
 since $\left\{ n,j\right\} $ is countable.\end{proof}

The lemma holds for each of the open sets around $\lambda_{k}.$ Since
the number of these sets is finite, then for any fixed $n_{j}$ the
condition (\ref{eq:cond}) is satisfied for a set of realizations
with a measure of $1-\delta_{j},$ where $\delta_{j}$ decreases with
$n_{j}.$ 

This leads to the following proposition.

\begin{proposition} \label{prop:asymp} (Lower bound on Eigenfunctions)
Given $\varepsilon,\delta>0,$ with probability $1-\delta,$ $\exists\left\{ n_{j}^{\ast}\left(\varepsilon,\delta\right)\right\} \subseteq\mathbb{N},$
such that\begin{equation}
\left(\left\vert \psi_{E_{j}}\left(n\right)\right\vert ^{2}+\left\vert \psi_{E_{j}}\left(n+1\right)\right\vert ^{2}\right)^{1/2}\asymp P\left(n\right)e^{-\gamma\left(E_{j}\right)\left\vert n\right\vert }\geq e^{-\left(\gamma\left(E_{j}\right)+\varepsilon\right)\left\vert n\right\vert }\end{equation}
 for all $n\geq n_{j}^{\ast}\left(\varepsilon,\delta\right)$, and
where $n$ is distance from the localization center. The proof of
the upper bound is similar. \end{proposition}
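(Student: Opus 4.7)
The proposition is a simultaneous-in-$j$ upgrade of the per-eigenvalue statement already established in the lemma (and sharpened in the paragraph immediately preceding the proposition), namely that for each index $j$ and any $\eta_j>0$ there is a threshold $n_j(\eta_j,\varepsilon)$ such that condition (\ref{eq:cond}) holds for all $|x|\ge n_j$ except on a set of potentials of measure at most $\eta_j$. The strategy is to combine this with a Borel--Cantelli-type union bound over the countable family of eigenvalues, trading the growth of the thresholds against a summable share of the total permitted error $\delta$.

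Concretely, I would first invoke the finite spectral cover by the intervals $I_k$ around $\lambda_k$, $k=1,\dots,K$, constructed at the beginning of Section II; only on this cover do we know that $\gamma(E)$ is controlled above and below by $(1\pm\bar\eta)\gamma(\lambda_k)$. Then, for each eigenvalue $E_j$, I would choose a summable sequence of positive numbers of total mass $\delta$ (for instance $\eta_j=\delta\,2^{-j}$) and set $n_j^\ast(\varepsilon,\delta):=n_j(\eta_j,\varepsilon)$ using the preceding paragraph. A union bound then gives
\begin{equation}
\mu\!\left(\bigcup_{j}\bigl\{\text{(\ref{eq:cond}) fails for some }|x|\ge n_j^\ast\bigr\}\right)\le \sum_j \eta_j=\delta,
\end{equation}
so on the complementary set, whose measure is at least $1-\delta$, condition (\ref{eq:cond}) holds simultaneously for every $j$ and every $|x|\ge n_j^\ast$. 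Rearranging (\ref{eq:cond}) yields the pointwise lower bound $|\psi_{E_j}(x)|\ge e^{-(\gamma(E_j)+\varepsilon)|x|}$, and hence the same bound for the pair quantity $\bigl(|\psi_{E_j}(x)|^2+|\psi_{E_j}(x+1)|^2\bigr)^{1/2}$, which dominates $|\psi_{E_j}(x)|$. The asymptotic equivalence with $P(x)\,e^{-\gamma(E_j)|x|}$ is inherited directly from the Furstenberg--Delyon--Souillard--Levy representation already invoked in the proof of the lemma.

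The main delicate point is the handling of the countable family of eigenvalues: since $n_j(\eta_j,\varepsilon)$ may grow rapidly as $\eta_j\to 0$, no uniform threshold is available, but the proposition explicitly permits a $j$-dependent choice, which is exactly what the summability of $\eta_j$ requires. A secondary subtlety, easily absorbed, is matching the single-site formulation of (\ref{eq:cond}) to the pair formulation appearing in the proposition; this goes one way for free, as noted above. Finally, the upper-bound counterpart referred to at the end of the proposition follows by the same argument after replacing (\ref{eq:cond}) with the matching upper inequality $|e^{(\gamma(E_j)-\varepsilon)|x|}\psi_{E_j}(x)|\le 1$, which is provided for a.e.\ potential by the same Furstenberg-type decay result.
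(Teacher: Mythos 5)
Your proposal is correct and follows essentially the same route as the paper: the finite spectral cover, the per-eigenvalue finiteness of the threshold $n_j$ coming from the Furstenberg/Delyon--Souillard--Levy decay and the monotone convergence of $S_{n,j}$, and then a countable union bound with a summable allocation of the error $\delta$. In fact you make explicit the union-bound step over the index $j$ that the paper leaves implicit between the Lemma and the Proposition, and your remarks on the single-site versus pair formulation and on the upper-bound counterpart are consistent with the paper's intent.
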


\begin{corollary} \label{lemma:deriv}For any $f={\displaystyle \sum\limits _{k=1}^{R}}c_{k}E_{i_{k}},$
there is an $x$ such that for $j>\left\vert x\right\vert ,$\begin{equation}
\Pr\left(\left\vert \frac{\partial f}{\partial\varepsilon_{j}}+\frac{\partial f}{\partial\varepsilon_{j+1}}\right\vert \leq C_{j}\right)\leq\bar{\delta}\left(j\right)\end{equation}
 where $C_{j}>e^{-2(\tilde{\gamma}+\varepsilon)j}$ and $\bar{\delta}\left(j\right)$
is monotonically decreasing with $j$. The decay rate $\tilde{\gamma}$
corresponds to an energy $E_{i_{k}}$ in the sum for $f$, namely
$\tilde{\gamma}=\gamma(E_{i_{k}})$. \end{corollary}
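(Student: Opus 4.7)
The plan is to rewrite $\partial f/\partial\varepsilon_j+\partial f/\partial\varepsilon_{j+1}$ via Hellmann--Feynman as a signed sum of pointwise squared eigenfunctions, and then to extract a single dominant exponential using Proposition~\ref{prop:asymp} together with its upper-bound counterpart (announced in the remark following the proposition).

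First, since the randomness in~(\ref{eq:Anderson_model}) enters only diagonally, Hellmann--Feynman gives $\partial E_i/\partial\varepsilon_j=|\psi_{E_i}(j)|^2$, so that
\begin{equation*}
\frac{\partial f}{\partial\varepsilon_j}+\frac{\partial f}{\partial\varepsilon_{j+1}}
=\sum_{k=1}^{R}c_k\bigl(|\psi_{E_{i_k}}(j)|^2+|\psi_{E_{i_k}}(j+1)|^2\bigr).
\end{equation*}
Among the finitely many fixed localization centers $i_1,\dots,i_R$, let $k^\ast$ be the index minimizing $\gamma(E_{i_k})|j-i_k|$ as $|j|\to\infty$, which is well-defined and unique once $|j|$ exceeds a threshold $|x|$ depending only on the data $\{i_k,c_k,\gamma(E_{i_k})\}$; this defines $\tilde\gamma=\gamma(E_{i_{k^\ast}})$. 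On an event of probability $\geq 1-2\delta$, Proposition~\ref{prop:asymp} and its upper-bound analogue hold simultaneously for every eigenstate and supply, for $|j-i_k|\geq n^\ast_{i_k}(\varepsilon,\delta)$, both
\begin{equation*}
|\psi_{E_{i_{k^\ast}}}(j)|^2+|\psi_{E_{i_{k^\ast}}}(j+1)|^2\geq e^{-2(\tilde\gamma+\varepsilon)|j-i_{k^\ast}|}
\end{equation*}
and, for $k\neq k^\ast$,
\begin{equation*}
|\psi_{E_{i_k}}(j)|^2+|\psi_{E_{i_k}}(j+1)|^2\leq e^{-2(\gamma(E_{i_k})-\varepsilon)|j-i_k|}.
\end{equation*}

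Next I would enlarge $|x|$ if necessary, and choose $\varepsilon$ strictly smaller than half the gap $\min_{k\neq k^\ast}(\gamma(E_{i_k})-\tilde\gamma)$, so that for $j>|x|$ the weighted sum of the $R-1$ subdominant upper bounds is at most $\tfrac12|c_{k^\ast}|e^{-2(\tilde\gamma+\varepsilon)|j-i_{k^\ast}|}$. This is possible because the subdominant exponents then strictly exceed $2(\tilde\gamma+\varepsilon)|j-i_{k^\ast}|$ as $|j|\to\infty$. The reverse triangle inequality gives
\begin{equation*}
\left|\frac{\partial f}{\partial\varepsilon_j}+\frac{\partial f}{\partial\varepsilon_{j+1}}\right|\geq\tfrac12|c_{k^\ast}|e^{-2(\tilde\gamma+\varepsilon)|j-i_{k^\ast}|}>e^{-2(\tilde\gamma+\varepsilon)j},
\end{equation*}
the last inequality absorbing the constant $\tfrac12|c_{k^\ast}|e^{-2(\tilde\gamma+\varepsilon)|i_{k^\ast}|}$ into the $\varepsilon$-slack through one further enlargement of $|x|$. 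Any such quantity then serves as $C_j$, and setting $\bar\delta(j)=2\delta(j)$ with $\delta(j)\downarrow 0$---permitted by the monotonicity of $S_{n,j}$ in $n$ established above---yields the required monotone decrease of $\bar\delta(j)$.

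The main obstacle is the strict-dominance step: if two of the $\gamma(E_{i_k})$ happen to coincide, the $\varepsilon$-slack built into Proposition~\ref{prop:asymp}---the lower bound carrying exponent $\tilde\gamma+\varepsilon$ against the upper bound carrying $\gamma-\varepsilon$---cannot by itself separate the dominant term from a subdominant contribution with the same decay rate at a slightly farther center. The argument as sketched therefore relies on $\tilde\gamma$ being strictly less than the other $\gamma(E_{i_k})$ (generic, since the $E_{i_k}$ are a.s.\ distinct and $\gamma$ is continuous); a degenerate case would need to be broken either by supplementing the exponential bound with information about the sub-exponential prefactor $P(n)$, or by a perturbation argument exploiting the fact that the $c_k$ are fixed integers.
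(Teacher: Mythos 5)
Your overall strategy---Hellmann--Feynman, then isolating a single dominant exponential via the lower bound of Proposition~\ref{prop:asymp} and an upper-bound counterpart, then the reverse triangle inequality---is the same as the paper's. But the obstacle you flag at the end is not a side remark: it is the genuine gap, and your proof does not close it. Your choice of $\varepsilon$ strictly smaller than half the gap $\min_{k\neq k^{\ast}}\left(\gamma\left(E_{i_{k}}\right)-\tilde{\gamma}\right)$ makes $\varepsilon$, and hence the threshold $x$, depend on the realization of the potential, because the Lyapunov exponents $\gamma\left(E_{i_{k}}\right)$ are random and their pairwise gaps can be arbitrarily small with positive probability. Saying the degeneracy is ``generic'' because the $E_{i_{k}}$ are a.s.\ distinct does not give a uniform $x$ or a quantitative $\bar{\delta}\left(j\right)$: the statement requires a single $x$ and a bound on the measure of the bad set, so the near-degenerate realizations must be \emph{excluded with controlled probability}, not dismissed.

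The paper resolves exactly this point. It fixes a separation scale $\eta$, discards the realizations for which two of the relevant $\gamma$'s are closer than $\eta$, and controls the measure of the discarded set via the Minami estimate (this is the content of Lemma~\ref{lemma:crossings} and in particular the bound (\ref{m17}), which converts an energy-spacing estimate into a Lyapunov-exponent-spacing estimate using the analyticity of $\gamma\left(E\right)$ away from the finitely many critical points of $d\gamma/dE$). On the retained set the condition $j>\gamma_{\min}n^{\ast}/\eta$ guarantees strict dominance, which is why $C_{j}$ ends up of the form $e^{-\gamma_{\max}\gamma_{\min}n^{\ast}/\eta}$ and why the excluded measure enters $\bar{\delta}\left(j\right)$ (see Remark~\ref{remark:1}). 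The paper also adds a cluster decomposition of the localization centers for general configurations, ensuring a suitable $j$ exists between clusters; your minimization of $\gamma\left(E_{i_{k}}\right)\left|j-i_{k}\right|$ plays a similar role for identifying the dominant term but does not by itself produce the realization-independent threshold. To complete your argument you would need to import the $\eta$-exclusion step (or an equivalent quantitative control on near-coincident Lyapunov exponents) and account for its probability in $\bar{\delta}\left(j\right)$.
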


\begin{proof} Using Feynman-Hellman theorem\begin{equation}
\left\vert \frac{\partial f}{\partial\varepsilon_{j}}+\frac{\partial f}{\partial\varepsilon_{j+1}}\right\vert =\left\vert {\displaystyle \sum\limits _{k=1}^{R}}c_{k}\left(\left\vert \psi_{i_{k}}\left(j\right)\right\vert ^{2}+\left\vert \psi_{i_{k}}\left(j+1\right)\right\vert ^{2}\right)\right\vert .\label{eq:deriv_sum}\end{equation}

First we order the eigenfunctions in the RHS of (\ref{eq:deriv_sum})
by their localization centers, $\left\{ i_{1}<i_{2}<\cdots<i_{R}\right\} ,$
where $R$ is the number of eigenfunctions in the sum. If the eigenfunctions
are tightly packed in the sense, that their localization centers are
found in a box of size $n^{\ast}\left(\varepsilon,\delta\right)$
then we can take $j$ such that $\left.\left\vert j-i_{R}\right\vert >n^{\ast}\left(\varepsilon,\delta\right)\right.$
and Proposition \ref{prop:asymp} would apply for all the eigenfunctions
in the sum. We assume for simplicity $j>0$. Here $n^{*}\left(\varepsilon,\delta\right)\equiv\sup_{l\epsilon\left\{ i_{m}\right\} _{m=1}^{R}}n_{l}^{*}\left(\varepsilon\right).$
For $j$ large enough the asymptotic behavior of this sum is dictated
by one term with the smallest inverse localization length, $\tilde{\gamma}$.
In the worst case, there are two states which are localized at a distance
of $n^{\ast}$ apart on the lattice and have very close inverse localization
lengths $\tilde{\gamma},\left(\tilde{\gamma}\pm\eta\right)$, with
$\eta$ small. By continuity of $\gamma\left(E\right)$ the respective
energies are typically close. For these states we have to go for a
distance of\begin{equation}
j>\gamma_{\min}n^{\ast}/\eta,\label{eq:j}\end{equation}
 where $\gamma_{\min}=\inf_{E}\gamma\left(E\right).$ This results
from the requirement that $\left.e^{-\gamma_{\min}j}\gg e^{-\left(\gamma_{\min}+\eta\right)\left(j-n^{\ast}\right)}\right.,$
which is satisfied for $\left.e^{\left(\gamma_{\min}+\eta\right)n^{\ast}-\eta j}\ll1\right..$
We will disregard realizations of the potentials for which many $\gamma^{\prime}s$
are closer than $\eta,$ however we notice that the measure of such
realizations goes to zero with $\eta.$ In fact the smallest energy
spacing results from avoided crossings which are exponentially small
in $n^{\ast}.$ In this process we practically eliminated realizations
where arbitrarily small energy differences are found. (See next section
for detailed discussion of this issue, a quantitative estimate is
(\ref{m17})). For a general configuration of the localization centers
in the sum we will separate the localization centers to clusters such
that the distance between two adjacent clusters is at least $n^{*}\gamma_{\min}/\eta$.
Following from above, there are intervals on the lattice between the
clusters, such that the contribution of each cluster to the sum is
governed by one exponential. Therefore we can always find $j^{\prime}s$
between the clusters such that only one cluster and actually one state
dominates, moreover at this point the magnitude of the sum satisfies
the inequality\begin{equation}
C_{j}\geq e^{-\gamma_{\max}\gamma_{\min}n^{\ast}/\eta},\end{equation}
 where $\gamma_{\max}=\sup_{E}\gamma\left(E\right).$ In this way
we obtain the following proposition.

\begin{proposition} Given $R$ energies $E_{i_{1}},E_{i_{2}},...,E_{i_{R}}$
we can always decompose this set into a set of disjoint clusters such
that the largest $n^{*}$ is $cR/\gamma_{min}$, where with probability
$1-\varepsilon,for$ $j>n^{*}\left(\varepsilon,R\right)$ \[
\sum_{k=1}^{R}c_{k}\left|\psi_{E_{i_{k}}}\left(j\right)\right|^{2}\geq ce^{-2\left(\tilde{\gamma}+\varepsilon\right)\left|j\right|},\]
 with $c_{k}$ that are integers and $c$ is a constant. That is for
some $j$, some $\left|\psi_{E_{i_{k}}}\left(j\right)\right|^{2}$
dominates all other $\left|\psi_{E_{i_{k}}}(j)\right|^{2}$.\end{proposition}

Using the above proposition we can find an $x,$ such that for $j>x,$
the sum (\ref{eq:deriv_sum}) will be dominated by one term with the
smallest $\gamma\left(E\right),$ such that\begin{equation}
\Pr\left(\left\vert \frac{\partial f}{\partial\varepsilon_{j}}\right\vert \leq C_{j}\right)\leq\bar{\delta}\left(j\right),\end{equation}
 and $C_{j}$ depends only on the relative distance between $j$ and
the localization center of the dominant eigenfunction, and therefore
does not depend on the locations of the other localization centers
of the states in $f$. Here $\bar{\delta}\left(j\right)$ is the measure
of all the realizations that were excluded so that Proposition \ref{prop:asymp}
holds. Due to the Proposition \ref{prop:asymp} both $C_{j}$ and
$\bar{\delta}\left(j\right),$ confined with (\ref{eq:j}) are decreasing
functions of $j.$ Therefore $C_{j}$ and $\bar{\delta}\left(j\right)$
are bounded by $C_{x}$ and $\bar{\delta}\left(x\right)$ \end{proof}

\section{The integral over $\left\vert f\right\vert ^{-s}$}

\begin{lemma} \label{lemma:crossings}For each $j,$ exists a finite
interval $\Delta\varepsilon_{j}>0$ for which the function $\frac{\partial f}{\partial\varepsilon_{j}}+\frac{\partial f}{\partial\varepsilon_{j+1}}$
does not change sign with probability $1-\bar{\delta}_{1}(j)$ \end{lemma}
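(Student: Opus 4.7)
The plan is to combine the high-probability lower bound on $g:=\tfrac{\partial f}{\partial\varepsilon_{j}}+\tfrac{\partial f}{\partial\varepsilon_{j+1}}$ furnished by Corollary \ref{lemma:deriv} with a quantitative continuity estimate for $g$ in the variable $\varepsilon_{j}$. Concretely, the Feynman--Hellman identity rewrites $g$ as $\sum_{k} c_{k}\bigl(|\psi_{i_{k}}(j)|^{2}+|\psi_{i_{k}}(j+1)|^{2}\bigr)$, a quantity that depends analytically on $\varepsilon_{j}$ away from level crossings. The strategy is therefore: fix a realization outside the exceptional set of measure $\bar\delta(j)$ of Corollary \ref{lemma:deriv}, so that $|g|\ge C_{j}$; then show that over a sufficiently short interval in $\varepsilon_{j}$ the function $g$ cannot travel the distance $C_{j}$ needed to change sign, because its $\varepsilon_{j}$-derivative stays bounded.

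First I would bound $\partial g/\partial\varepsilon_{j}$ by standard first-order perturbation theory for the eigenvectors, which gives
\begin{equation}
\frac{\partial}{\partial\varepsilon_{j}}|\psi_{i_{k}}(x)|^{2}
= 2\,\psi_{i_{k}}(x)\sum_{m\neq i_{k}}\frac{\psi_{i_{k}}(j)\,\psi_{m}(j)}{E_{i_{k}}-E_{m}}\,\psi_{m}(x),
\end{equation}
and an analogous expression for $x\to j+1$. The numerators are controlled in absolute value by $1$ (normalization), so the only potential source of blow-up is the denominator $E_{i_{k}}-E_{m}$. Thus an upper bound $|\partial g/\partial\varepsilon_{j}|\le M_{j}$ reduces to a lower bound on the relevant energy gaps.

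Next I would excise the realizations containing unusually small gaps. Using the Wegner/Minami-type discussion of avoided crossings already invoked in the proof of Corollary \ref{lemma:deriv} (the reference to estimate (m17) in the next section), the measure of configurations for which $|E_{i_{k}}-E_{m}|<\eta$ for some pair relevant to the sum is controlled by a small quantity that vanishes with $\eta$. Choosing $\eta$ as an appropriate negative power of $e^{j}$, matched to the exponential scale of $C_{j}$, yields an $M_{j}$ that is only mildly exponentially large in $j$, together with an additional excluded set of measure $\bar\delta_{2}(j)$.

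Putting the pieces together, on the complement of the excluded set, of measure at least $1-\bar\delta_{1}(j):=1-\bar\delta(j)-\bar\delta_{2}(j)$, the mean value theorem gives that if we move $\varepsilon_{j}$ by at most $\Delta\varepsilon_{j}:=C_{j}/(2M_{j})$ then $g$ changes by less than $C_{j}/2$, so its initial sign is preserved. The main obstacle in this plan is precisely the quantitative control of the denominators $E_{i_{k}}-E_{m}$: one must ensure that after excluding small-gap realizations the remaining interval $\Delta\varepsilon_{j}$ is still strictly positive and that the exceptional measure $\bar\delta_{1}(j)$ inherits the monotone-decay behavior of $\bar\delta(j)$ in $j$, which requires the exponential gap bounds coming from the avoided-crossing analysis rather than merely Wegner's density estimate.
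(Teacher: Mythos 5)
Your strategy is genuinely different from the paper's. The paper never bounds the $\varepsilon_{j}$-derivative of $g=\frac{\partial f}{\partial\varepsilon_{j}}+\frac{\partial f}{\partial\varepsilon_{j+1}}$. Instead it observes that, by Corollary \ref{lemma:deriv}, $g$ is asymptotically a \emph{single} dominant term $c_{k_{0}}\bigl(|\psi_{E_{i_{k_{0}}}}(j)|^{2}+|\psi_{E_{i_{k_{0}}}}(j+1)|^{2}\bigr)$ (the one with the smallest Lyapunov exponent), whose sign is simply the sign of $c_{k_{0}}$; the sign of $g$ can therefore flip only if the \emph{identity} of the dominant term changes, which by continuity of $\gamma(E)$ requires an avoided crossing between two of the $R$ energies appearing in $f$ (plus a separate treatment of the points where $d\gamma/dE$ vanishes). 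Minami's estimate then bounds the probability of such a near-degeneracy \emph{among those $R$ levels only}, giving $\bar{\delta}_{1}(j)$. Your mean-value-theorem argument is a legitimate alternative in outline, but it trades this soft dominance argument for a hard quantitative one, and that is where the difficulty concentrates.

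The genuine gap is in your bound $|\partial g/\partial\varepsilon_{j}|\leq M_{j}$. The perturbation-theory sum runs over \emph{all} eigenstates $m$ of the finite volume, not just the $R$ states in $f$, so $M_{j}$ is controlled by the minimal gap between $E_{i_{k}}$ and the entire rest of the spectrum. That minimal gap is typically of order $|\Lambda|^{-2}$, and excluding all realizations with some gap below $\eta$ costs, by Minami summed over $\sim\eta^{-1}$ covering intervals, a probability of order $\eta|\Lambda|^{2}$; keeping this small forces $\eta$ (hence $M_{j}$ and $\Delta\varepsilon_{j}$) to depend on the volume, and $\bar{\delta}_{1}(j)$ then neither stays small uniformly in $|\Lambda|$ nor inherits decay in $j$. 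The only way to rescue the estimate is to use exponential localization to truncate the sum over $m$ to states localized within $O(\log(1/\eta)/\gamma_{\min})$ of the site $j$ (the factors $\psi_{m}(j)$ make the remainder negligible) and to apply Minami only in that window; this truncation is the essential step and is absent from your sketch, where the claim that $M_{j}$ is ``only mildly exponentially large in $j$'' is asserted rather than derived. A second, smaller omission: the bound $|\partial g/\partial\varepsilon_{j}|\leq M_{j}$ must hold uniformly over the whole interval $[\varepsilon_{j},\varepsilon_{j}+\Delta\varepsilon_{j}]$, not just at the initial configuration; since $|\partial E_{n}/\partial\varepsilon_{j}|=|\psi_{E_{n}}(j)|^{2}\leq1$, gaps close at speed at most $2$, so this is repairable by additionally requiring $\Delta\varepsilon_{j}\leq\eta/4$, but it needs to be said for the mean value theorem to apply.
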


\begin{proof} The function $\frac{\partial f}{\partial\varepsilon_{j}}+\frac{\partial f}{\partial\varepsilon_{j+1}}$
is given by\begin{equation}
\frac{\partial f}{\partial\varepsilon_{j}}+\frac{\partial f}{\partial\varepsilon_{j+1}}={\displaystyle \sum\limits _{k=1}^{R}}c_{k}\left(\left\vert \psi_{E_{i_{k}}}\left(j\right)\right\vert ^{2}+\left\vert \psi_{E_{i_{k}}}\left(j+1\right)\right\vert ^{2}\right).\end{equation}
 Due to Corollary \ref{lemma:deriv} there is a $j\geq j_{0}$ for
which the sum is dominated by only one term with the smallest $\gamma\left(E\right).$
We will denote the state corresponding to this energy by its eigenvalue
$E_{j}^{0}$. The sign of $\frac{\partial f}{\partial\varepsilon_{j}}+\frac{\partial f}{\partial\varepsilon_{j+1}}$
can change only when there is an avoided crossing between the $E_{j}^{0}$
state and one of the other states $E_{k}$, in the sum defining $f$,
since then the dominating terms will interchange due to the fact that
$\gamma\left(E\right)$ is a continuous function of $E$. Consider
a cluster of size $L$, which is large enough to cover all the eigenstates,
such that outside of this cluster the relevant eigenstates are smaller
than $e^{-\gamma_{\min}n^{*}}$. Using this cluster, the Minami estimate
\cite{Minami} gives\begin{equation}
\Pr\left(TrP_{H_{\omega}}^{(L)}(I)\geq2\right)\leq\left(\pi\left\Vert \rho\right\Vert _{\infty}IL\right)^{2},\end{equation}
 where $\left\Vert \rho\right\Vert _{\infty}$ is the supremum of
the density of states, $I$ is some energy interval while $P_{H_{\omega}}^{(L)}(I)$
is the spectral projection on that interval. We can reformulate this
estimate in such a manner that the probability to find any $i$ and
$k$ with energies in this interval satisfies \begin{equation}
\Pr\left(\left\vert E_{i}-E_{k}\right\vert \leq I\right)\leq\left(\pi\left\Vert \rho\right\Vert _{\infty}IL\right)^{2}.\end{equation}
 Dividing the spectrum into intervals of length $I=\delta_{1}L^{-\nu}$
with $\delta_{1}>0$ and $\nu>2$, we find that the probability to
find any pair $i,k$ in such an interval satisfies \begin{equation}
\Pr\left(\left\vert E_{i}-E_{k}\right\vert \leq\delta_{1}L^{-\nu}\right)\leq\delta_{1}\Delta\pi^{2}\left\Vert \rho\right\Vert _{\infty}^{2}L^{-\left(2\nu-2\right)}.\label{eq:en_cover}\end{equation}
 Therefore the probability of an energy interchange can be made arbitrarily
small. After the interchange of the energies the magnitudes of the
corresponding $\gamma^{\prime}s$ will interchange and one (different)
state will dominate at $j$. This assumes that $d\gamma/dE$ for this
state is not too small. If it vanishes the following evaluation is
required. In this situation, the change in $\gamma$ for a small change
$\delta_{3}$ in $E,$ we assume that there is a finite number of
points, $N_{b}$, where the derivative of $\gamma(E)$ is zero, which
follows from the analyticity of $\gamma\left(E\right)$, that in turn
follows from the Thouless formula \cite{Thouless} and the analyticity
of the density of states \cite{HKS,S,Taylor}. We take a small interval,
$\delta_{2}$, around those points for which, $d\gamma/dE\leq\delta_{2}$.
We will exclude realizations with energies which are found within
this interval, by using the Minami estimate, \begin{equation}
\Pr\left(\left\vert E_{i}-E_{k}\right\vert \leq I(\delta_{2})\right)\leq N_{b}\left(\pi\left\Vert \rho\right\Vert _{\infty}I(\delta_{2})L\right)^{2}\leq\tilde{\delta}_{1}(j).\end{equation}
 where $I(\delta_{2})$ is the interval of energies for which , $d\gamma/dE\leq\delta_{2}$.
This interval, $I(\delta_{2})$, is a monotonically decreasing function
of $\delta_{2}$. Energies which are found outside of this interval
will have a derivative $d\gamma/dE\geq\delta_{2}$ and therefore using
(\ref{eq:en_cover}) \begin{equation}
\Pr\left(\left\vert \gamma(E_{i}-\gamma(E_{j})\right\vert \leq\delta_{1}\delta_{2}L^{-\nu}\right)\leq\pi^{2}\delta_{1}\Delta\left\Vert \rho\right\Vert _{\infty}^{2}L^{-\left(2\nu-2\right)}\leq\tilde{\delta}_{2}(j).\label{m17}\end{equation}
 where $\tilde{\delta}_{2}(j)$ can be made arbitrarily small. Here
we choose $L$ to be much larger than the cluster in question. To
ensure that this probability is small we have to choose $\nu>2$.
Note that the number of intervals grows as $L^{\nu}$. The realizations
which are excluded are of probability which is bounded by $\bar{\delta}_{1}(j)=\tilde{\delta}_{1}(j)+\tilde{\delta}_{2}(j)$.
\end{proof}

\begin{remark} \label{remark:1} The bound on the probability of
the previous corollary should be modified with $\bar{\delta}_{j}$
replaced by $\bar{\delta}(j)+\bar{\delta}_{1}(j)$. \end{remark}

\begin{theorem} \label{th:main_theorem}For $f={\displaystyle \sum\limits _{k=1}^{R}}c_{k}E_{i_{k}},$the
following mean is bounded from above\begin{equation}
\left\langle \frac{1}{\left\vert f\right\vert ^{s}}\right\rangle _{\delta}={\displaystyle \int}{\displaystyle \prod\limits _{i=1}^{\left\vert \Lambda\right\vert }}d\mu\left(\varepsilon_{i}\right)\frac{1}{\left\vert f\right\vert ^{s}}\leq D_{\delta}.\end{equation}
 where $\left\langle .\right\rangle _{\delta}$ denotes a mean over
realizations of a measure $1-\delta$ and $D_{\delta}$ is independent
of $\left\vert \Lambda\right\vert .$We analyze the behavior of the
integral\begin{equation}
I_{f}\equiv{\displaystyle \int}{\displaystyle \prod\limits _{i=1}^{\left\vert \Lambda\right\vert }}d\mu\left(\varepsilon_{i}\right)\frac{1}{\left\vert f\left(\vec{\varepsilon}\right)\right\vert ^{s}}\end{equation}

\end{theorem}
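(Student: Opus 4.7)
The plan is an Aizenman--Molchanov style change-of-variables argument. Corollary \ref{lemma:deriv} provides a lower bound on $|\partial_{\varepsilon_{j}}f+\partial_{\varepsilon_{j+1}}f|$, Lemma \ref{lemma:crossings} guarantees local monotonicity, and together these allow the integration variables $(\varepsilon_{j},\varepsilon_{j+1})$ to be traded for $f$ itself, reducing the problem to a one-dimensional integral of $|f|^{-s}$ over a bounded interval, which is finite for $s<1$. Fix $\delta>0$ and choose a single index $j$ large enough that $\bar{\delta}(j)+\bar{\delta}_{1}(j)<\delta$; this is possible because both quantities tend to zero as $j\to\infty$. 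On the good set $G$, defined as the intersection of the events of Corollary \ref{lemma:deriv} and Lemma \ref{lemma:crossings} (of measure at least $1-\delta$), one has simultaneously $|\partial_{\varepsilon_{j}}f+\partial_{\varepsilon_{j+1}}f|\geq C_{j}$ and no sign change of this quantity on an interval of positive length. By definition $\langle\,\cdot\,\rangle_{\delta}$ is the integral restricted to $G$.

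Next, perform the orthogonal substitution $u=(\varepsilon_{j}+\varepsilon_{j+1})/\sqrt{2}$, $v=(\varepsilon_{j}-\varepsilon_{j+1})/\sqrt{2}$, for which $\partial f/\partial u=(1/\sqrt{2})(\partial_{\varepsilon_{j}}f+\partial_{\varepsilon_{j+1}}f)$ and the Jacobian is unity. Freezing $v$ and all $\varepsilon_{i}$ with $i\neq j,j+1$, the map $u\mapsto f$ is piecewise monotone on $G$ with $|\partial f/\partial u|\geq C_{j}/\sqrt{2}$. Since the Hamiltonian has norm bounded by $2+\Delta$, the value of $|f|$ is bounded by a constant $M$ depending on $R$, the integer coefficients $\{c_{k}\}$, and $\Delta$; on each monotone piece, changing variables $u\mapsto f$ gives
\begin{equation*}
\int du\,\frac{1}{|f|^{s}}\;\leq\;\frac{\sqrt{2}}{C_{j}}\int_{-M}^{M}\frac{df}{|f|^{s}}\;=\;\frac{2\sqrt{2}\,M^{1-s}}{(1-s)\,C_{j}}.
\end{equation*}
Integrating out $v$ and the remaining $\varepsilon_{i}$'s against the product probability measure contributes only constants depending on $\Delta$, yielding
\begin{equation*}
\left\langle\frac{1}{|f|^{s}}\right\rangle_{\delta}\;\leq\;\frac{K(s,R,\{c_{k}\},\Delta)}{C_{j}}\;=:\;D_{\delta},
\end{equation*}
which is independent of $|\Lambda|$ because $j$ and $C_{j}$ depend only on $(\delta,\varepsilon,R)$ through Corollary \ref{lemma:deriv}.

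The delicate point is the change-of-variables step: Lemma \ref{lemma:crossings} only guarantees monotonicity on a finite subinterval $\Delta\varepsilon_{j}$, so the $u$-interval must be decomposed into finitely many monotone pieces separated by avoided crossings of the eigenvalues $\{E_{i_{k}}\}$ participating in $f$. The number of such pieces must be controlled uniformly in $|\Lambda|$. This should follow from the Minami-type estimate (\ref{m17}): a sign change of $\partial_{\varepsilon_{j}}f+\partial_{\varepsilon_{j+1}}f$ requires two of the $R$ eigenvalues contributing to the sum (\ref{eq:deriv_sum}) to exchange dominance, so generically at most $O(R^{2})$ segments occur, and close crossings with gaps below $\delta_{1}\delta_{2}L^{-\nu}$ can be absorbed into the excluded probability $\delta$. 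One must also verify that the uniform density factor and the number of monotone pieces do not re-introduce any dependence on the volume, which amounts to checking that all constants in Corollary \ref{lemma:deriv}, Lemma \ref{lemma:crossings}, and Remark \ref{remark:1} can be chosen independently of $|\Lambda|$.
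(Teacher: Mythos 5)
Your proposal follows essentially the same route as the paper: the orthogonal substitution to $\varepsilon_{j}^{\pm}$, the change of variables to $f$ with Jacobian $\left\vert \partial f/\partial\varepsilon_{j}^{+}\right\vert^{-1}$ bounded via Corollary \ref{lemma:deriv}, the decomposition into sign-constant (monotone) domains justified by Lemma \ref{lemma:crossings}, and the bound $\left\vert f\right\vert\leq Q$ giving $D_{\delta}\sim C_{\delta}^{-1}Q^{1-s}/(1-s)$. The multiplicity issue you flag as delicate is handled in the paper by noting that the sum over the monotone domains of the remaining $\varepsilon$-volume collapses to $1/(2\Delta)$, so no explicit count of pieces is needed.
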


\begin{proof} We denote, $\vec{\varepsilon}\equiv\left\{ \varepsilon_{1,}\cdots\varepsilon_{\left\vert \Lambda\right\vert }\right\} ,$
and assume, that $\mu\left(\varepsilon_{i}\right)$ is uniform in
the interval $\left[-\Delta,\Delta\right]$. We first change variables
(for a $j$ chosen to optimize the bound on (\ref{f20})) to \begin{eqnarray*}
\varepsilon_{j}^{+} & = & \frac{1}{\sqrt{2}}\left(\varepsilon_{j}+\varepsilon_{j+1}\right)\\
\varepsilon_{j}^{-} & = & \frac{1}{\sqrt{2}}\left(\varepsilon_{j}-\varepsilon_{j+1}\right)\end{eqnarray*}
 with the Jacobian equal to $1$. Then we change the variables to
$\left(f,\varepsilon_{1},\ldots,\varepsilon_{j}^{-},\varepsilon_{j+1},\ldots\varepsilon_{\left\vert \Lambda\right\vert }\right),$
that produces the Jacobian\begin{equation}
\frac{\partial\left(f,\varepsilon_{1},\ldots,\varepsilon_{\left\vert \Lambda\right\vert }\right)}{\partial\left(\varepsilon_{1},\ldots,\varepsilon_{\left\vert \Lambda\right\vert }\right)}=\left\vert \begin{array}{cccc}
\frac{\partial f}{\partial\varepsilon_{j}^{+}} & \frac{\partial f}{\partial\varepsilon_{1}} & \cdots & \frac{\partial f}{\partial\varepsilon_{\left\vert \Lambda\right\vert }}\\
 & 1\\
 &  & \ddots\\
 &  &  & 1\end{array}\right\vert =\left\vert \frac{\partial f}{\partial\varepsilon_{j}^{+}}\right\vert .\label{f20}\end{equation}
 The integral using the new variables is\begin{equation}
I_{f}=\frac{1}{\left(2\Delta\right)^{\left\vert \Lambda\right\vert }}\int d\varepsilon_{j}^{-}{\displaystyle \int_{-\Delta}^{\Delta}}{\displaystyle \prod\limits _{i=1,i\neq j}^{\left\vert \Lambda\right\vert }}d\varepsilon_{i}\int_{f\left(\vec{\varepsilon},-\overrightarrow{\Delta}\right)}^{f\left(\vec{\varepsilon},\vec{\Delta}\right)}\left\vert \frac{\partial f}{\partial\varepsilon_{j}^{+}}\right\vert ^{-1}df\frac{1}{\left\vert f\right\vert ^{s}}.\end{equation}
 In order to take into the account the multiplicity, which results
of this change of variables, we will divide the range of integration
in the $\vec{\varepsilon}$ space into domains $m_{l},$ $l=1,...,M$
, such that in each domain the sign of the Jacobian does not change.
This results in\begin{equation}
I_{f}=\frac{1}{\left(2\Delta\right)^{\left\vert \Lambda\right\vert }}\sum_{l=1}^{M}{\displaystyle \int_{m_{l}}}{\displaystyle \prod\limits _{i=1,i\neq j}^{\left\vert \Lambda\right\vert }}d\varepsilon_{i}\int_{f_{l}^{-}}^{f_{l}^{+}}\left\vert J_{l}\right\vert ~df\frac{1}{\left\vert f\right\vert ^{s}}\end{equation}
 where $J_{l}$ is the Jacobian $\left\vert \frac{\partial f}{\partial\varepsilon_{j}^{+}}\right\vert ^{-1}$
in the $l-th$ domain and $f_{l}^{\pm}$ are the extremal values of
$f$ in this domain. Such splitting into a finite number of domains
is possible due to Lemma \ref{lemma:crossings} and since $\left\vert \frac{\partial f}{\partial\varepsilon_{j}^{+}}\right\vert =\frac{1}{\sqrt{2}}\left\vert \frac{\partial f}{\partial\varepsilon_{j}}+\frac{\partial f}{\partial\varepsilon_{j+1}}\right\vert $.
Since the integrand is positive the value of the integral can be only
increased by increasing the integration volume. Therefore we obtain
the following inequality \begin{equation}
\left\vert f\right\vert =\left\vert {\displaystyle \sum\limits _{k=1}^{R}}c_{k}E_{i_{k}}\right\vert \leq\max_{\omega}\left\vert E_{i_{k}}\right\vert {\displaystyle \sum\limits _{k}}\left\vert c_{k}\right\vert =\left(2+\Delta\right){\displaystyle \sum\limits _{k}}\left\vert c_{k}\right\vert \leq Q\end{equation}
 where we have used the fact that the eigenvalues are bounded and
the sum is finite. The integral can be bounded by \begin{align}
I_{f} & =\frac{1}{\left(2\Delta\right)^{\left\vert \Lambda\right\vert }}\sum_{l=1}^{M}{\displaystyle \int_{m_{l}}^{}}{\displaystyle \prod\limits _{i=1,i\neq j}^{\left\vert \Lambda\right\vert }}d\varepsilon_{i}\int_{f_{l}^{-}}^{f_{l}^{+}}\left\vert J_{l}\right\vert ~df\frac{1}{\left\vert f\right\vert ^{s}}\\
 & \leq\int_{0}^{Q}df\frac{1}{f^{s}}\sup_{\left\{ \varepsilon_{i}\right\} }\left(\left\vert \frac{\partial f}{\partial\varepsilon_{j}^{+}}\right\vert ^{-1}\right)\left(\frac{1}{\left(2\Delta\right)^{\left\vert \Lambda\right\vert }}\sum_{l=1}^{M}{\displaystyle \int_{m_{l}}^{}}{\displaystyle \prod\limits _{i=1,i\neq j}^{\left\vert \Lambda\right\vert }}d\varepsilon_{i}\right)\nonumber \end{align}
 with $\left\vert J_{l}\right\vert \leq\sup_{\left\{ \varepsilon_{i}\right\} }\left(\left\vert \frac{\partial f}{\partial\varepsilon_{j}^{+}}\right\vert ^{-1}\right).$
The last term is an integral over all $\varepsilon_{j}$ excluding
one, therefore the volume transforms as \begin{equation}
\left(\frac{1}{\left(2\Delta\right)^{\left\vert \Lambda\right\vert }}\sum_{l=1}^{M}{\displaystyle \int_{m_{l}}^{}}{\displaystyle \prod\limits _{i=1,i\neq j}^{\left\vert \Lambda\right\vert }}d\varepsilon_{i}\right)=\frac{1}{2\Delta}.\end{equation}
 Using Corollary \ref{lemma:deriv} and Remark \ref{remark:1} the
integral can be further bounded by\begin{equation}
I_{f}\leq C_{\delta}^{-1}\frac{Q^{1-s}}{2\Delta\left(1-s\right)}\equiv D_{\delta}\label{eq:If}\end{equation}
 with $D_{\delta}$ independent of of the size of system and of $R$
while $C_{\delta}$ is the smallest $C_{j}.$ \end{proof}

\section{The effect of renormalization}

In some applications \cite{FKS1,FKS2} the function $f=\sum_{k}c_{k}E_{i_{k}}$
should be replaced by $f^{\prime}=\sum_{k}c_{k}E_{i_{k}}^{\prime}$,
where $E_{i}^{\prime}$ are the renormalized energies and $f^{\prime}=f+\beta F(\left\{ \overrightarrow{\varepsilon}\right\} )$.
In this application, to the leading order only one energy denoted
by $E_{0}$ is renormalized, namely \[
E_{i}^{\prime}=E_{i}+\beta\delta_{i,0}V_{0}^{000}\]
 where $\beta$ is a small parameter and $V_{0}^{000}=\sum_{i}\psi_{E_{0}}^{4}(i)$.
We show that to this order $C_{\delta}$ is not affected by $F$ for
$\beta<$const $e^{-2\left(\gamma_{\max}-\gamma_{\min}\right)\left\vert x_{\delta}\right\vert },$
where $\gamma_{\min}$ and $\gamma_{\max}$ are the minimal and the
maximal Lyapunov exponents and $\left\vert x_{\delta}\right\vert $
is defined by $C_{\delta}=\left(C_{j},j=x_{\delta}\right).$

The derivative of $V_{0}^{000}$ is given by\begin{equation}
\frac{\partial V_{0}^{000}}{\partial\varepsilon_{j}^{}}={\displaystyle \sum\limits _{i}}4\psi_{E_{0}}^{3}\left(i\right)\frac{\partial\psi_{E_{0}}\left(i\right)}{\partial\varepsilon_{j}}.\end{equation}

\begin{lemma}\begin{equation}
\Pr\left(\left\vert \frac{\partial V_{0}^{000}}{\partial\varepsilon_{j}^{+}}\right\vert \geq De^{-2\left(\gamma_{\min}-\eta\right)\left\vert x_{j}\right\vert }\right)\leq e^{-2\eta s\left\vert x_{j}\right\vert }\qquad0<s<1\end{equation}
 where $\gamma_{\min}$ is the minimal Lyapunov exponent. \end{lemma}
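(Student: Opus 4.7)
The plan is to bound a fractional moment of $|\partial V_0^{000}/\partial\varepsilon_j^+|$ and then apply Markov's inequality. Indeed, if one can establish
\begin{equation*}
\mathbb{E}\bigl[|\partial V_0^{000}/\partial\varepsilon_j^+|^s\bigr]\le D^s\,e^{-2s\gamma_{\min}|x_j|}
\end{equation*}
for $0<s<1$, then Markov with threshold $t=De^{-2(\gamma_{\min}-\eta)|x_j|}$ gives exactly $\Pr(\,\cdot\,\ge t)\le e^{-2\eta s|x_j|}$, which is the claimed bound. The problem is therefore reduced to obtaining the exponentially small fractional moment.

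First, I would apply the Feynman--Hellmann formula to write
\begin{equation*}
\frac{\partial\psi_{E_0}(i)}{\partial\varepsilon_j}=\psi_{E_0}(j)\sum_{m\neq 0}\frac{\psi_{E_m}(j)\,\psi_{E_m}(i)}{E_0-E_m}\;\equiv\;\psi_{E_0}(j)\,G_\perp(j,i;E_0),
\end{equation*}
where $G_\perp$ is the resolvent with the singular $m=0$ pole projected out. Substituting into $\partial V_0^{000}/\partial\varepsilon_j=4\sum_i\psi_{E_0}^3(i)\,\partial_j\psi_{E_0}(i)$ yields the factorised form
\begin{equation*}
\frac{\partial V_0^{000}}{\partial\varepsilon_j}=4\,\psi_{E_0}(j)\,\sum_i\psi_{E_0}^3(i)\,G_\perp(j,i;E_0),
\end{equation*}
and the $\varepsilon_j^+$ combination is simply the sum of two such terms at $j$ and $j+1$, each of which will be handled identically.

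Second, I would bound the two factors separately. For the prefactor $|\psi_{E_0}(j)|$, the upper-bound counterpart of Proposition \ref{prop:asymp} gives $|\psi_{E_0}(j)|\le e^{-(\gamma_{\min}-\eta/2)|x_j|}$ outside an exceptional set of exponentially small measure, providing one factor of $e^{-(\gamma_{\min}-\eta/2)|x_j|}$. For the Green's-function sum, I would invoke the Aizenman--Molchanov fractional-moment bound $\mathbb{E}[|G(j,i;E)|^s]\le Ae^{-\mu|j-i|}$, together with the fact that $\psi_{E_0}^3(i)$ is concentrated near the localization centre $i_0$, so the dominant contribution comes from $|G(j,i_0;E_0)|\sim e^{-\mu|x_j|}$. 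The Minami bound (\ref{m17}) is used to rule out realizations in which removing the $m=0$ pole leaves a near-resonant $E_m$ producing a spuriously large residual. Combining the two $s$-moment factors via Hölder (or a conditioning/decorrelation argument on the eigenvalues of $H$ restricted away from site $j$) delivers the required $D^s e^{-2s\gamma_{\min}|x_j|}$.

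The main obstacle is this Green's-function step: one has to reconcile the deterministic-$E$ version of Aizenman--Molchanov with the fact that here $E=E_0$ is itself an eigenvalue of $H$ and the corresponding pole has been deleted by hand. The Minami estimate supplies the necessary level spacing between $E_0$ and its nearest neighbour with probability $1-\tilde\delta_2(j)$, so that $G_\perp$ behaves like a genuine off-diagonal resolvent, and the sub-exponential Thouless prefactor $P(x)$ of Proposition \ref{prop:asymp} is absorbed into the tolerance $\eta/2$. Once the fractional-moment bound is in place, Markov's inequality closes the argument immediately, and the excluded-realizations probability $\bar\delta_1(j)$ can be reshuffled into the constant $D$.
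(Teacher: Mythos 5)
Your overall skeleton coincides with the paper's: the same first-order perturbation formula for $\partial\psi_{E_0}(i)/\partial\varepsilon_j$, a fractional-moment bound of the form $\langle|\partial V_0^{000}/\partial\varepsilon_j^+|^s\rangle\leq \text{const}\cdot e^{-2\gamma_{\min}s|x_j|}$, and then Chebyshev/Markov with the threshold $De^{-2(\gamma_{\min}-\eta)|x_j|}$, which indeed yields $e^{-2\eta s|x_j|}$ exactly as you compute. The divergence, and the gap, is in how the fractional moment of the sum $\sum_{m\neq 0}\psi_{E_m}(j)\psi_{E_m}(i)/(E_0-E_m)$ is controlled. You propose to treat it as a reduced resolvent $G_\perp(j,i;E_0)$ and invoke the Aizenman--Molchanov bound $\mathbb{E}[|G(j,i;E)|^s]\leq Ae^{-\mu|j-i|}$. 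But that bound is proved for a \emph{fixed, deterministic} energy $E$; here $E=E_0(\omega)$ is itself an eigenvalue of the random operator, so the event being integrated over is correlated with the spectral parameter, and the single-site conditional-measure argument underlying Aizenman--Molchanov does not go through. You flag this obstacle yourself and gesture at Minami to restore a level spacing, but you never actually close it: Minami controls the probability of two eigenvalues in a small interval, not the fractional moment of $|E_0-E_m|^{-s}$ summed over all $m$, and "reshuffling $\bar\delta_1(j)$ into the constant $D$" does not convert a probabilistic exclusion into a moment bound.

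The paper resolves precisely this point by \emph{not} using Aizenman--Molchanov here. It expands the reduced resolvent in the eigenbasis, uses subadditivity of $x\mapsto x^s$ ($0<s<1$) to distribute the $s$-power over the sum, bounds the eigenfunction factors $|\psi_{E_k}(j)|^s|\psi_{E_k}(i)|^s$ deterministically by their exponential decay, and then applies Theorem \ref{th:main_theorem} with $f=E_0-E_k$ (the case $R=2$, $c_1=1$, $c_2=-1$) to obtain $\langle|E_0-E_k|^{-s}\rangle_\delta\leq D_\delta$ uniformly in the volume. In other words, the whole earlier machinery of the paper exists exactly to supply the eigenvalue-difference fractional-moment estimate that replaces the Aizenman--Molchanov input you are missing. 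To repair your argument you should substitute that theorem for the Green's-function step; as written, the central estimate of your proof rests on a bound that is not available at a random eigenvalue.
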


\begin{proof} Using a perturbation expansion in $\varepsilon_{j}$
(to the first order), this derivative, $\frac{\partial\psi_{E_{n}}\left(i\right)}{\partial\varepsilon_{j}}$,
can be calculated\begin{equation}
\frac{\partial\psi_{E_{n}}\left(i\right)}{\partial\varepsilon_{j}}=\psi_{E_{n}}\left(j\right){\displaystyle \sum\limits _{k\neq n}}\frac{\psi_{E_{k}}\left(j\right)\psi_{E_{k}}\left(i\right)}{E_{n}-E_{k}}.\end{equation}
 We now use the fractional moment technique to get a probabilistic
bound on this derivative. Using the bound on the eigenfunctions and
the estimate on the average of $\left\langle \frac{1}{\left\vert E_{n}-E_{k}\right\vert ^{s}}\right\rangle _{\delta},$
(see Theorem \ref{th:main_theorem}), we obtain \begin{align}
\left\langle \left\vert \frac{\partial\psi_{E_{n}}\left(i\right)}{\partial\varepsilon_{j}}\right\vert ^{s}\right\rangle _{\delta} & =\left\langle \left\vert \psi_{E_{n}}\left(j\right)\right\vert ^{s}\left\vert {\displaystyle \sum\limits _{k\neq n}}\frac{\psi_{E_{k}}\left(j\right)\psi_{E_{k}}\left(i\right)}{E_{n}-E_{k}}\right\vert ^{s}\right\rangle _{\delta}\\
 & \leq\sup_{\omega}\left(\left\vert \psi_{E_{n}}\left(j\right)\right\vert ^{s}{\displaystyle \sum\limits _{k\neq n}}\left\vert \psi_{E_{k}}\left(j\right)\right\vert ^{s}\left\vert \psi_{E_{k}}\left(i\right)\right\vert ^{s}\right)\left\langle \frac{1}{\left\vert E_{n}-E_{k}\right\vert ^{s}}\right\rangle _{\delta}\nonumber \\
 & \leq\bar{D}_{\varepsilon}D_{\delta}e^{-\gamma_{\min}s\left\vert x_{j}-x_{n}\right\vert }e^{-\gamma_{\min}s\left\vert x_{j}-x_{i}\right\vert }.\nonumber \end{align}
 Therefore the derivative $\frac{\partial V_{0}^{000}}{\partial\varepsilon_{j}}$
(where $x_{0}=0$), can be bounded by\begin{align}
\left\langle \left\vert \frac{\partial V_{0}^{000}}{\partial\varepsilon_{j}}\right\vert ^{s}\right\rangle _{\delta} & \leq4\bar{D}_{\varepsilon}D_{\delta}e^{-\gamma_{\min}s\left\vert x_{j}\right\vert }{\displaystyle \sum\limits _{i}}e^{-3\gamma_{\min}s\left\vert x_{i}\right\vert }e^{-\gamma_{\min}s\left\vert x_{j}-x_{i}\right\vert }\\
 & \leq4\bar{D}_{\varepsilon}D_{\delta}e^{-2\gamma_{\min}s\left\vert x_{j}\right\vert }{\displaystyle \sum\limits _{i}}e^{-2\gamma_{\min}s\left\vert x_{i}\right\vert }=\tilde{D}_{\varepsilon}D_{\delta}e^{-2\gamma_{\min}s\left\vert x_{j}\right\vert }.\nonumber \end{align}
 Using the definition of $\varepsilon_{j}^{+}$\[
\left\langle \left\vert \frac{\partial V_{0}^{000}}{\partial\varepsilon_{j}^{+}}\right\vert ^{s}\right\rangle _{\delta}\leq\frac{1}{2^{s/2}}\left(\left\langle \left\vert \frac{\partial V_{0}^{000}}{\partial\varepsilon_{j}}\right\vert ^{s}\right\rangle _{\delta}+\left\langle \left\vert \frac{\partial V_{0}^{000}}{\partial\varepsilon_{j+1}}\right\vert ^{s}\right\rangle _{\delta}\right)\leq\tilde{D}_{\varepsilon}D_{\delta}e^{-2\gamma_{\min}s\left\vert x_{j}\right\vert }\]
 Utilizing the Chebychev inequality we complete the proof. \end{proof}

\begin{remark} Since the derivative $\frac{\partial V_{0}^{000}}{\partial\varepsilon_{j}}$
is bounded from above we can always select $\beta$ in such way that
its contribution will be smaller than the derivatives of the energies.
\end{remark}

\begin{acknowledgments}
This work was partly supported by the Israel Science Foundation (ISF),
by the US-Israel Binational Science Foundation (BSF), by the USA National
Science Foundation (NSF), by the Minerva Center of Nonlinear Physics
of Complex Systems, by the Shlomo Kaplansky academic chair and by
the Fund for promotion of research at the Technion. We had illuminating
and informative discussions with many of our colleagues. Of particular
importance were the discussion with M. Aizenman, I. Goldshield, M.
Golsdtein, W.-M. Wang and S. Warzel.
\end{acknowledgments}

\end{document}